\newcommand{\pkg}[1]{\textsf{#1}\xspace}
\newtheorem{theorem}{Theorem}
\let \oldbm \bm
\renewcommand{\vec}[1]{\oldbm{#1}}
\theoremstyle{definition}
\newtheorem{definition}{Definition}
\theoremstyle{remark}
\newtheorem{remark}{Remark}
\newtheorem*{remark*}{Remark}
\def\bk{{\vec k}}
\def\bA{{\vec A}}
\def\bB{{\vec B}}
\def\bQ{{\vec Q}}
\def\bB{{\bf B}}
\def\bm{{\vec m}}
\def\bsigma{{\boldsymbol \sigma}}
\def\Tr{\mathop{\mathrm{Tr}}}
\newcommand{\beq}{\begin{equation}}
\newcommand{\eeq}{\end{equation}}
\newcommand{\beqarray}{\begin{eqnarray}}
\newcommand{\eeqarray}{\end{eqnarray}}
\begin{document}

\definecolor{lightblue}{HTML}{1A73E8} 

\title{Bootstrapping Flat-band Superconductors:\\ Rigorous Lower Bounds on Superfluid Stiffness}

\author{Qiang~Gao}
\author{Zhaoyu~Han}
\author{Eslam~Khalaf}
\affiliation{Department of Physics, Harvard University, Cambridge, Massachusetts 02138, USA}

\date{\today}

\begin{abstract}
The superfluid stiffness fundamentally constrains the transition temperature of superconductors, especially in the strongly coupled regime. However, accurately determining this inherently quantum many-body property in microscopic models remains a significant challenge. In this work, we show how the \textit{quantum many-body bootstrap} framework, specifically the reduced density matrix (RDM) bootstrap, can be leveraged to obtain rigorous lower bounds on the superfluid stiffness in frustration free interacting models with superconducting ground state. We numerically apply the method to a special class of frustration free models, which are known as quantum geometric nesting models, for flat-band superconductivity, where we uncover a general relation between the stiffness and the pair mass. Going beyond the familiar Hubbard case within this class, we find how additional interactions, notably simple 
magnetic couplings, can enhance the superfluid stiffness. Furthermore, we find that the RDM bootstrap unexpectedly reveals that the trion-type correlations are essential for bounding the stiffness, offering new insights on the structure of these models.
Straight-forward generalization of the method can lead to bounds on susceptibilities complementary to variational approaches.
Our findings underscore the immense potential of the quantum many-body bootstrap as a powerful tool to derive rigorous bounds on physical quantities beyond energy.
\end{abstract}

\maketitle

\textit{Introduction.--} Superfluid stiffness ($D_\text{s}$) is a central property of superconductors (SC)~\cite{PhysRevB.47.7995}, which limits the SC transition temperature $T_c$ along with the pairing gap. Particularly for two-dimensional SCs, it has units of energy and controls the Berzinksii-Kosterlitz-Thouless transition temperature \cite{NelsonPRL1977}. The realization of SC phases in flat band (FB) moir'e materials \cite{CaoNature2018, CaoNature2018a, yankowitz2019tuning, park2021tunable, hao2021electric, zhou2021superconductivity, han2025signatures} has highlighted the importance of understanding stiffness in FB systems. While in conventional SCs, $D_\text{s}$ is dominantly determined by single-particle dispersion, such contribution vanishes in FB systems. As a result, superfluid stiffness in these systems arises entirely from the structure of the FB wavefunctions, known as quantum geometry. This has motivated efforts to study superfluid stiffness in FB SC and derive bounds on its value \cite{EmeryNature1995, Paramekanti1998, Hazra2019, Verma2021, mao2023, Hofmann2022,ShenPRL1993, TormaNatPhy2015, Julku2016, XiePRL2020, Herzog2022, herzog2022many, PhysRevB.106.014518,PhysRevB.110.024507}. We also note recent works which related quantum geometry to to various other properties of the SC~\cite{PhysRevLett.132.026002,hu2025anomalous,PhysRevResearch.7.023273,li2025vortexstatescoherencelengths}.

However, obtaining this quantity from unbiased calculations of microscopic models has been particularly challenging, since this is an intrinsically quantum many-body quantity that in principle receives contributions from {\it all} excited states~\cite{Note_on_contribution_from_excited_states}. In fact, most studies of FB superconductivity either considered a mean-field treatment which is generally uncontrolled or computed stiffness from the pair mass extracted from the two-body spectrum. The latter only provides an upper bound on the true many-body stiffness since the true Cooper pairs may be significantly dressed by particle-hole excitations that make them heavier, leading to smaller stiffness. Under certain assumptions, the two-body pair mass can be related to the quantum metric, which is lower bounded by the Chern number. Thus, these topological bounds correspond to ``lower bounds on upper bounds'', which do not necessarily imply strict bounds on the many-body stiffness or even that it has to be non-zero.
To date, evaluating stiffness has only been done numerically within a few examples~\cite{PhysRevLett.127.025301,zhang2025optimizing,PhysRevB.102.201112,PhysRevLett.130.226001,Herzog2022,PhysRevB.105.024502,PhysRevB.94.245149} using density matrix renormalization group in quasi-one-dimensional systems, or quantum Monte Carlo (QMC) methods for sign-problem-free models.

Confronting such challenges calls for non-perturbative, scalable techniques. The quantum-many-body bootstrap has been recently proposed as an efficient non-perturbative method for quantum many-body problems \cite{massaccesi2021variational, SemidefiniteRelaxation, haim2020variational,baumgratz2012lower,Han2020QMB,Kull2024LowerBounds,Gao2024QHBootstrap,scheer2024hamiltonian} with promising results on lattice models like the Hubbard model \cite{Han2020QMB, scheer2024hamiltonian}, spin chains \cite{haim2020variational, Kull2024LowerBounds}, and the fractional quantum Hall problem \cite{Gao2024QHBootstrap}. Although the bootstrap has a decades-long history in \emph{S}-matrix theory \cite{Chew1961Smatrix,Homrich2019SmatrixBootstrap}, conformal field theory \cite{Poland2019CFTreview}, and quantum chemistry \cite{Coleman1963, PhysRevLett.108.263002}, its application to condensed-matter systems is comparatively new, with early efforts devoted to rigorous lower bounds on ground state (GS) energies \cite{Han2020QMB,Kull2024LowerBounds,Berenstein2023SDPbootstrap,massaccesi2021variational, SemidefiniteRelaxation, haim2020variational,baumgratz2012lower, Gao2024QHBootstrap, scheer2024hamiltonian}. Recent work has shown that bootstrap-based approaches can also provide rigorous bounds on observables beyond the energy \cite{Wang2024CertifyGS, Fawzi2024, cho2024coarse}.

In this work, we show how the quantum many-body bootstrap approach can be used to rigorously lower bound $D_\text{s}$ at zero temperature in a special class of superconducting models that are frustration-free (FF). A FF model is one where the Hamiltonian can be written as a sum of few-body interactions such that each term is minimized by the same GS. We show that the energy lower bound obtained in the bootstrap approach is exact at the FF point. This allows us to rigorously lower-bound $D_\text{s}$ by lower-bounding the GS energy in the presence of a small flat gauge connection (equivalently a twisted boundary condition). Our approach complements numerical and analytical variational approaches that can only provide upper bounds.

To showcase the power of this method, we numerically study a class of FF models for flat-band SC, called the quantum geometric nesting (QGN) models~\cite{PhysRevX.14.041004,SM}; familiar examples in this class include attractive Hubbard models~\cite{ TormaNatPhy2015,PhysRevB.94.245149,PhysRevB.106.014518,PhysRevLett.132.026002} with a special property called uniform pairing condition (UPC) on the FBs~\cite{PhysRevB.94.245149,herzog2022many}. Remarkably, we find that in all cases the computed lower bound on stiffness equals (up to small numerical errors) a rigorous upper bound obtained from a BCS-type variational ansatz (see S.M.~\cite{SM} for the proof of the upper bound)
\begin{equation}\label{lower_bound_formula}
D_\text{s} = \frac{N_\text{flat}}{2}\nu (1-\nu) m_\text{pair}^{-1}
\end{equation}
where $m_\text{pair}^{-1}$ is the inverse Cooper pair mass evaluated at the same system size (defined as the second derivative of the lowest spectrum with respect to the total momentum in the two-particle sector), and $N_\text{flat}$ is the number of FBs (counting spin degeneracy).
Based on our results, we conjecture Eq.~\eqref{lower_bound_formula} to be the exact expression for $D_\text{s}$ for QGN models, leaving a rigorous proof to future work. Note that this remarkably simple result relates a genuinely quantum many-body property to a few-body one, which is much easier to compute and interpret. It also implies that topological or geometric lower bounds on the pair mass translate to lower bounds on the many-body stiffness. Physically, it indicates that in QGN models the bare Cooper pairs do not get dressed by particle-hole excitations, which is generally not true.

\begin{figure}[t]
    \centering
    \includegraphics[width=0.84\linewidth]{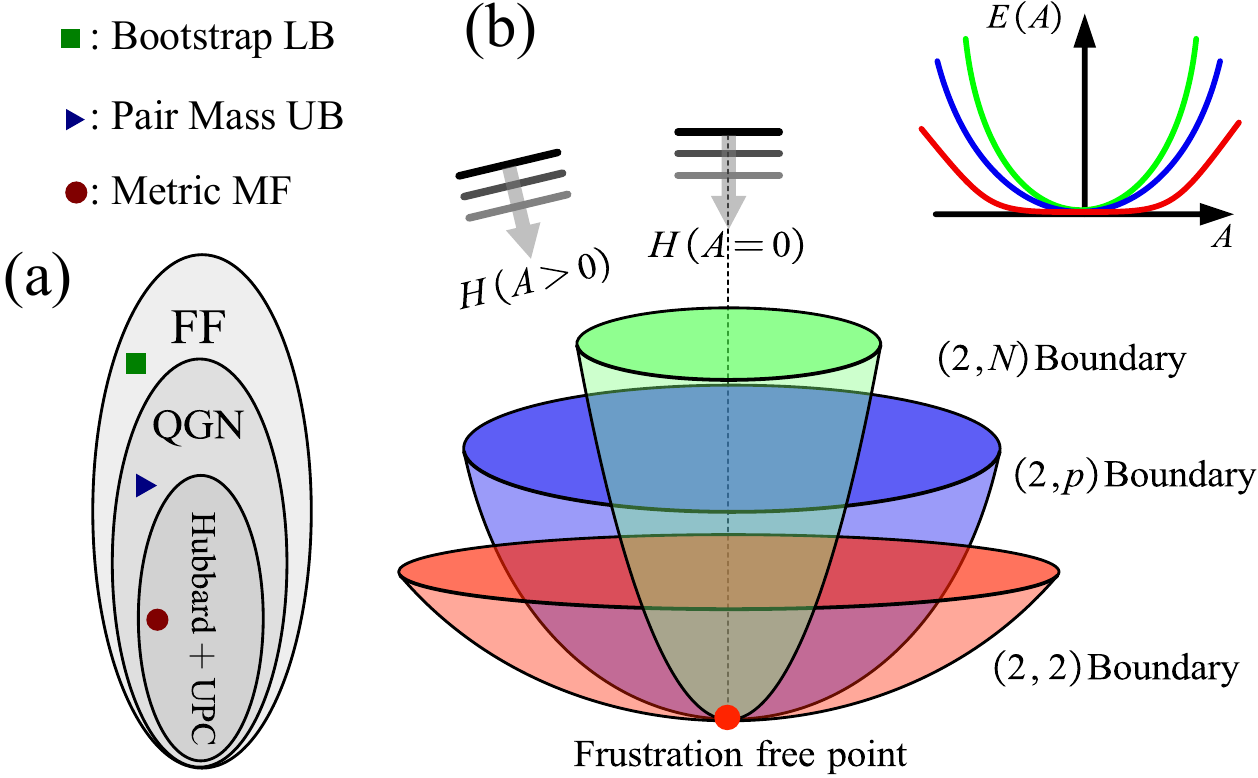}
     \caption{(a) Schematic illustration of classes of models with exact superconducting ground states. Our lower-bounding method applies to the most general class: frustration-free (FF) models. Within the FF set, quantum geometric nesting (QGN) models admit a variational upper bound based on the pair mass [Eq.~\eqref{lower_bound_formula}], which we prove in the SM and find to be saturated by the bootstrap lower bound, suggesting its exactness. A subset of QGN models with Hubbard interactions and uniform pairing condition (UPC) previously yielded mean-field results for the stiffness in terms of the minimal quantum metric~\cite{TormaNatPhy2015,PhysRevB.94.245149,PhysRevB.106.014518,herzog2022many}. (b) Schematic plot of feasible regions at different levels of the 2RDM bootstrap hierarchy. The red dot marks the frustration-free point where all feasible boundaries of $(2,p)$ constraints for $p=2,\cdots,N$ meet. The arrows indicate the optimization direction for different Hamiltonians (with different external parameter $A$). The upright corner schematically shows the energy $E(A)$ obtained after imposing $(2,p)$ constraints at different levels.
    } 
\label{fig:Bootstrap_Superconductor_schematic}
\end{figure}

In the Hubbard model and assuming UPC, the pair mass can be exactly evaluated as
$m_\text{pair}^{-1}=\frac{2g|U|}{N_\text{band}}$,
with $g$ the minimum quantum metric~\cite{PhysRevB.98.220511,herzog2022many}. Substituting this into Eq.~\eqref{lower_bound_formula} yields the previous mean-field result~\cite{TormaNatPhy2015,PhysRevB.94.245149,PhysRevB.106.014518}. Eq.~\eqref{lower_bound_formula} has thus been anticipated in the Hubbard case~\cite{herzog2022many,PhysRevB.109.174508,PhysRevA.99.053608,PhysRevB.98.220511}, and a series of topological bounds (on $g$ and thus $D_\text{s}$) based on this result has been put forward~\cite{XiePRL2020, Herzog2022}, which we now justify with a rigorous many-body lower bound. However, we stress that beyond the UPC Hubbard limit, $m_\text{pair}$ generically depends in a complicated way on the interaction and the FB wavefunctions $|u^{n}_\bk\rangle$. Intuitively, it measures the mismatch between the wavefunctions at paired momenta as a small center-of-mass momentum is introduced. The general relation Eq.~\eqref{lower_bound_formula} is thus a new result for general QGN models. 

Specifically, we performed numerical studies of the flat-band attractive Hubbard model with two different electronic structures: one with topological FBs~\cite{PhysRevB.102.201112} and the other with FBs with tunable quantum metric~\cite{PhysRevLett.130.226001}. Both models are sign-problem-free and were studied earlier using Monte Carlo~\cite{PhysRevB.102.201112, PhysRevLett.130.226001}. We then include interactions beyond the Hubbard limit such that the model remains in the QGN family (and thus FF) but not sign-problem-free~\cite{Note_on_sign_problem}, illustrating that our method goes beyond what is possible with QMC. Interestingly, we find a simple magnetic coupling that enhances the stiffness.

To date, there are few quantum many-body bootstrap schemes proposed, differing mainly in the bootstrap variables employed and the constraints imposed. In this work, we employ the reduced density matrix (RDM) bootstrap, which uses the two-particle RDM as the variable and imposes positivity constraints based on the hierarchy developed by Mazziotti~\cite{Maziotti2005, PhysRevLett.108.263002, Mazziotti2012, mazziotti2023quantum}. This method works for finite systems, but we show that our results follow simple finite-size scaling, allowing extrapolation to the thermodynamic limit. We emphasize that the RDM bootstrap goes beyond a purely numerical tool by directly providing information about particle correlations~\cite{Gao2024QHBootstrap}. In particular, we find that constraints built from trion operators ($c(c^\dagger c^\dagger)$ or $c^\dagger (cc)$) determine the stiffness.

\textit{2RDM bootstrap and constraint hierarchy.--} 
The bootstrap scheme we adopt relies on the following observation. 
For a two-body Hamiltonian $\hat H=\sum_{\alpha\beta\gamma\delta}\leftindex^2H^{\alpha\beta}_{\gamma\delta}c^\dagger_\alpha c^\dagger_\beta c_\delta c_\gamma$, it suffices to know the two-particle reduced density matrix (2RDM),
$\leftindex^2D^{\alpha \beta}_{\gamma \delta}\equiv\Tr[\rho c^\dagger_\alpha c^\dagger_\beta c_\delta c_\gamma]$,
with $\rho$ the GS density operator, to compute the GS energy \emph{exactly}: one minimizes the linear functional $\Tr{[\leftindex^2H\leftindex^2D]}$ over the space of 2RDMs without knowing the wavefunction (details can be found in S.M.~\cite{SM}).
The difficulty is that it is hard to determine whether a given two-particle density matrix (2DM) is reduced from a valid $N$-particle state. This is known as the $N$-representability problem~\cite{Coleman1963} and is QMA-hard (the quantum analogue of NP-hard)~\cite{schuch2009computational}. While computing the GS energy by minimizing over the exact set of 2RDMs (denoted $\leftindex^2_N{\mathbb{D}}$) is difficult, one can instead construct a superset $\leftindex^2_N{\tilde{\mathbb{D}}}\supseteq\leftindex^2_N{\mathbb{D}}$ satisfying constraints that are necessary but not sufficient for $N$-representability.
Minimizing the energy functional $E[\leftindex^2D]:= \Tr{[\leftindex^2H\leftindex^2D]}$ over $\leftindex^2_N{\tilde{\mathbb{D}}}$ therefore yields a rigorous lower bound on the GS energy, since the true GS 2RDM lies within $\leftindex^2_N{\mathbb{D}}$.

One important class of constraints is \textit{positivity}: for any operator $\hat{O}$, one must have
$\langle\hat{O}\hat{O}^\dagger\rangle\equiv\Tr[\rho\hat{O}\hat{O}^\dagger]\geq 0$.
The simplest example takes $\hat O_D = \sum_{ij} A^D_{ij} c_i^\dagger c_j^\dagger$.
Imposing $\langle \hat O_D \hat O_D^\dagger \rangle \geq 0$ for all $A^D_{ij}$ implies that the matrix
$\leftindex^2 D^{ij}_{kl} := \langle c^\dagger_i c^\dagger_j c_lc_k\rangle$,
i.e. the 2RDM, is positive semidefinite (PSD).
Two other choices are ${\hat{O}}_Q = \sum_{ij}A^Q_{ij}c_ic_j$ and
${\hat{O}}_G = \sum_{ij}A^G_{ij}c^\dagger_i c_j$, giving rise to the constraints
$\leftindex^2 G^{ij}_{kl} := \langle c^\dagger_i c_j c^\dagger_lc_k\rangle\succeq 0$
and
$\leftindex^2 Q^{ij}_{kl} := \langle c_ic_j c^\dagger_l c^\dagger_k\rangle\succeq 0$.
The matrices $\leftindex^2G$ and $\leftindex^2Q$ can be expressed in terms of $\leftindex^2D$ using fermionic anticommutation relations, yielding non-trivial constraints on the 2RDM.
The three operator classes above exhaust constraints generated by fermion bilinears.

To construct additional constraints, one must use higher-body operators, e.g. $\hat O \sim c^\dagger c c$.
However, the resulting positivity constraints $\langle \hat O \hat O^\dagger \rangle \geq 0$ cannot be expressed solely in terms of the 2RDM and would require introducing higher-body RDMs.
To circumvent this, one instead considers convex combinations of such constraints,
$\sum_i \langle \hat O_i \hat O_i^\dagger \rangle \geq 0$,
where the $\hat O_i$ are polynomials with $p>2$ fermion operators but whose sum is expressible entirely in terms of the 2RDM.
This defines the $(2,p)$ constraint hierarchy developed by Mazziotti~\cite{Mazziotti2012}.
The simplest case takes
$\hat O_{\rm T1} = \sum_{ijk} A^{\rm T1}_{ijk} c_i^\dagger c_j^\dagger c_k^\dagger$.
One then observes that the operator
$\hat O_{\rm T1} \hat O_{\rm T1}^\dagger + \hat O_{\rm T1}^\dagger\hat O_{\rm T1}$
contains only two-body terms due to fermionic antisymmetry, despite being constructed from three-body operators.
This yields the constraint
$\leftindex^3[T_1]^{ijk}_{lmn} := \langle c^\dagger_i c^\dagger_j c^\dagger_kc_nc_mc_l
+ c_lc_m c_n c^\dagger_k c^\dagger_j c^\dagger_i\rangle\succeq 0$,
where $\leftindex^3T_1$ depends only on the 2RDM.
Another constraint constructed from three-body operators is T2, corresponding to
$\leftindex^3[T_2]^{ijk}_{lmn}
= \langle c^\dagger_i c^\dagger_j c_k c^\dagger_nc_mc_l
+ c_lc_m c^\dagger_n c_k c^\dagger_j c^\dagger_i\rangle\succeq 0$.
Finally, Mazziotti~\cite{PhysRevLett.108.263002} showed that taking $p$ to the particle number $N$ in the $(2,p)$ hierarchy causes the feasible set $\leftindex^2_N{\mathbb{D}}^{(2,p)}$ to converge to the $N$-representable set:
\begin{equation}\label{2p_hierarchy}
    \begin{split}
    \leftindex^2_N{\mathbb{D}}^{(2,2)}\supseteq\leftindex^2_N{\mathbb{D}}^{(2,3)}
    \supseteq\cdots\supseteq \leftindex^2_N{\mathbb{D}}^{(2,N)}
    = \leftindex^2_N{\mathbb{D}}\\
   \Rightarrow E^{(2,2)}\leq E^{(2,3)}\leq \cdots\leq E^{(2,N)} = E_g.
    \end{split}
\end{equation}

\textit{Rigorous lower bound on the superfluid stiffness.--}
Superfluid stiffness is defined as the second derivative of the GS energy as a function of a flat connection (boundary twist): $\bk \mapsto \bk + \bA$. We note that having rigorous lower bounds on the GS energy generally does not provide sharp statements about its derivatives with respect to external parameters. However, if the GS energy at $\bA = 0$ is known exactly and the lower bounds are also exact at this point, then one can obtain rigorous bounds on the derivatives. Our argument to lower bound the stiffness proceeds as follows.

The RDM bootstrap provides a series of lower bounds on the GS energy for any flat connection $\bA$ from the $(2,p)$ hierarchy~\eqref{2p_hierarchy}:
\begin{equation}
    0\leq E^{(2,2)}(\bA)\leq\cdots\leq E^{(2,p)}(\bA)\leq \cdots\leq E_g(\bA),
\end{equation}
where we set the vacuum to be the zero energy reference.
If the Hamiltonian is FF, we know that $E_g(\bA = 0) = 0$. Furthermore, if the FF decomposition of the Hamiltonian involves only $p$-body operators, we show in End Matter that bootstrap bounds are exact with $(2,q)$ constraints for any $q \geq p$. This implies
$E^{(2,p)}(\bA = 0) =  E^{(2,p+1)}(\bA = 0) = \dots = E_g(\bA = 0) = 0$.
Most of the models we study are already FF at the two-body level, since the Hamiltonian is a sum of two-body interactions that annihilate the GS. This means that we obtain the exact GS energy already at the level of the $(2,2)$ constraints.
Finally, the global $U(1)$ gauge symmetry enforces
$\frac{\partial E_g}{\partial A}\big|_{A=0} = 0$,
which in turn implies
$\frac{\partial E^{(2,p)}}{\partial A}\big|_{A=0}
= \frac{\partial E^{(2,p+1)}}{\partial A}\big|_{A=0}
= \dots = 0$.
These considerations imply
\begin{equation}\label{lower_bound_stiffness}
    0\leq \left.\frac{\partial^2E^{(2,p)}}{\partial A^2}\right|_{A=0}
    \leq \dots \leq
    \left.\frac{\partial^2E_g}{\partial A^2}\right|_{A=0}
    = 4VD_\text{s}.
\end{equation}
For all examples we consider, this relation already holds starting at $p = 2$.

A schematic illustration of our argument is shown in Fig.~\ref{fig:Bootstrap_Superconductor_schematic}(b). A sequence of outer relaxations labeled by $(2,p)$ generally have different boundaries, but all boundaries meet at the FF point where the energy lower bounds are exact. As a result, the curvature of the $(2,p)$ boundaries lower bounds the true curvature of the set of physical 2RDMs (corresponding to the $(2,N)$ boundary).

In practice, we implement the following semidefinite program (SDP) using the solver \pkg{MOSEK}~\cite{mosek} together with \pkg{YALMIP}~\cite{YALMIP}:
\begin{equation}
\begin{split}
    \min_{\leftindex^2D}\quad &E(A)=\Tr{[H(A)\leftindex^2D]}\\
    \text{sub. to}\quad&
    \leftindex^2D, \leftindex^2G,\leftindex^2Q,\leftindex^3T_{1}, \leftindex^3T_{2}\succeq 0;\\
    &\Tr[\leftindex^2D]=\tfrac{N(N+1)}{2}.
\end{split}
\end{equation}
After obtaining the energy, we extract its curvature with respect to the gauge insertion $A$, yielding a lower bound on the stiffness.

\begin{figure}[t]
    \centering
    \includegraphics[width=0.95\linewidth]{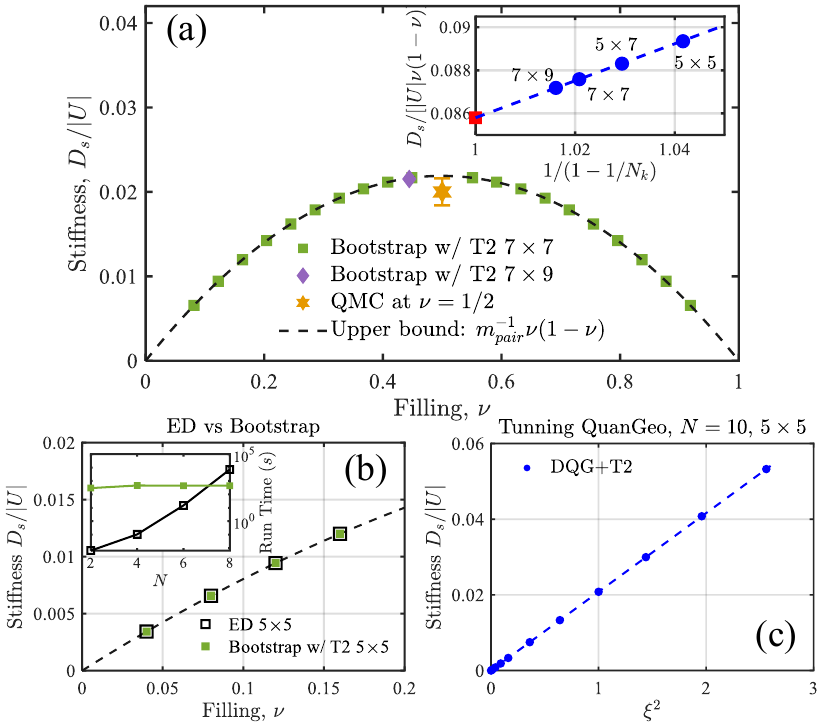}
    \caption{(a) Filling dependence of the superfluid stiffness in an attractive Hubbard model projected onto a topological FB (Model I~\cite{PhysRevB.102.201112}) for system size $7\times 7$ (squares). The square with a black outline denotes the exact result in the two-particle sector. The dashed curve shows $m_{\text{pair}}^{-1}\nu(1-\nu)$, where the pair mass $m_{\text{pair}}^{-1}$ is fixed by two-particle ED results for $7\times 7$. Determinant QMC data~\cite{Note_on_DQMC_data} are shown as a hexagram with error bar. The inset shows finite-size scaling of $D_\text{s}/[\nu(1-\nu)]$ obtained from bootstrapping different system sizes~\cite{Note_on_many_body_pair_mass}, indicating a correction of order $1/N_k$ to the many-body pair mass; the red dot marks the extrapolated thermodynamic value. (b) Comparison between exact diagonalization (ED) and bootstrap for the same system size ($5\times 5$) and fillings, showing perfect agreement. The inset shows, on a semi-log scale, the total elapsed time as a function of particle number $N$. (c) Quantum-geometry dependence of $D_\text{s}$ in an attractive Hubbard model projected onto a topologically trivial FB (Model II~\cite{PhysRevLett.130.226001}), where $\xi$ tunes the quantum geometry and the minimal quantum metric is $g=\xi^2/4$.} 
    \label{fig:filling_dependence}
\end{figure}

\textit{Models and Results.--}
We now present the results of the numerical bootstrap algorithm described above applied to a set of FF models for FB SCs belonging to the QGN family~\cite{PhysRevX.14.041004,Note_on_model_studied}. We consider two FB models: Model I describes a spinful FB with non-trivial spin Chern number~\cite{PhysRevB.102.201112}, and Model II describes a trivial spinful FB with tunable quantum geometry~\cite{PhysRevLett.130.226001}. Both models have spin $S^z$ conservation and time-reversal symmetry, which ensures the QGN condition for exact singlet-pairing SC GSs. We study the attractive Hubbard model~\eqref{FF_SC_Hamiltonian} in both cases, and the effects of additional magnetic FF interactions~\eqref{UJ model} for Model I. In all cases the interactions are projected onto the FB subspace, which is justified in the isolated-band limit where the gap to remote bands is much larger than the interaction strength.

Fig.~\ref{fig:filling_dependence} presents the results for the attractive Hubbard model, where panels (a,b) correspond to Model I and panel (c) to Model II. In Fig.~\ref{fig:filling_dependence}(a), we plot the filling dependence of the superfluid stiffness obtained from the bootstrap method (discrete data points) in comparison to the variational upper bound [Eq.~\eqref{lower_bound_formula}]. The pair mass $m_{\rm pair}$ is computed using exact diagonalization (ED) in the two-particle sector. The bootstrap calculations were performed for $7\times 7$ (green squares) and $7\times 9$ (purple diamonds). The finite-size scaling of the stiffness, shown in the inset of Fig.~\ref{fig:filling_dependence}(a), follows $\sim 1/(1-\frac{1}{N_k})$, allowing accurate extrapolation to the thermodynamic limit.

Surprisingly, we find that the bootstrap lower bound always saturates the variational upper bound within a small relative error of order $\sim 0.1\%$, suggesting that both methods give the \emph{exact} result. To verify this exactness in finite systems, we perform ED for a small $5\times 5$ system with few electrons and compare the results with the bootstrap, finding perfect agreement as shown in Fig.~\ref{fig:filling_dependence}(b). The inset shows the runtime of both ED and bootstrap as a function of particle number $N$, where ED scales exponentially with $N$, while the runtime of the bootstrap with T2 constraints is independent of $N$~\cite{Note_on_comparison_between_bootstrap_and_ED}. Furthermore, the extrapolated thermodynamic value for the ratio $D_\text{s}/[|U|\nu(1-\nu)]$ for Model I is $8.5799\times10^{-2}$ (red dot in the inset of Fig.~\ref{fig:filling_dependence}(a)), while the theoretical upper bound is $m_\text{pair}^{-1}/|U|=\frac{2 g}{N_\text{band}} = 8.5805\times10^{-2}$. As a reference, we also include the DQMC result extracted from Ref.~\cite{PhysRevB.102.201112} for the same system, which agrees with our result within error bars upon appropriate finite-size scaling~\cite{Note_on_DQMC_data}.

To further confirm this agreement, we apply our algorithm to Model II, which describes a trivial FB system with tunable minimal quantum metric $g=\xi^2/4$, where $\xi$ controls the spread of the Wannier orbitals~\cite{PhysRevLett.130.226001}. The results for $D_\text{s}$ at fixed filling $\nu$ and various $\xi$ are shown in Fig.~\ref{fig:filling_dependence}(c). We find that the lower bound on stiffness is strictly linear in $g$ and agrees \emph{quantitatively} with the variational upper bound~\eqref{lower_bound_formula}, thereby confirming the geometric origin of superfluid stiffness in this class of FB attractive Hubbard models.

\begin{figure}[t]
    \centering
    \includegraphics[width=0.85\linewidth]{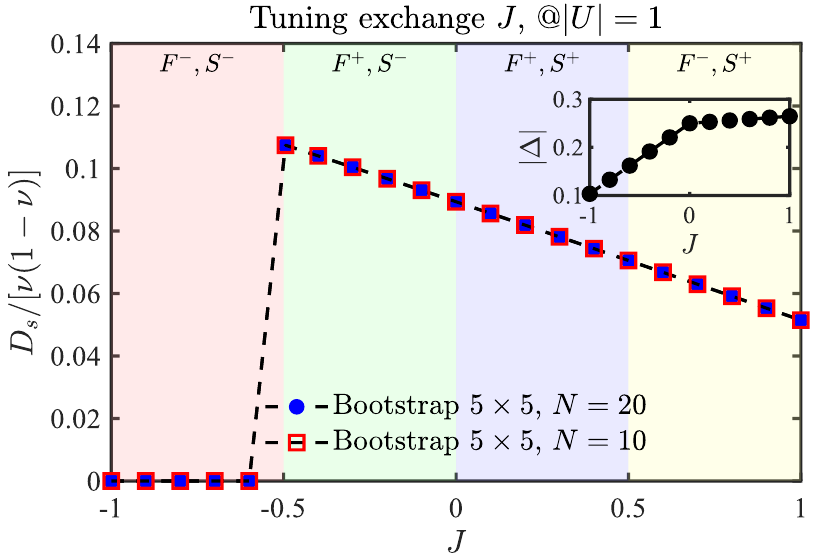}
    \caption{Tuning the additional nearest-neighbor $S^z$–$S^z$ interaction $J$ in a QGN model beyond the Hubbard limit (Model I$'$, see main text). The superfluid stiffness is shown as a function of $J$ with $|U|=1$ fixed. The plot is divided into four regions, $F^{+(-)}$ and $S^{+(-)}$, denoting (non-)FF and (non-)sign-problem-free regimes (e.g., $F^+,S^-$ indicates FF but sign problematic). Results are shown for two fillings, $N=20$ (blue) and $N=10$ (red), for a system size of $5\times 5$. The inset shows the exact single-particle excitation gap, which is independent of filling.}
    \label{fig:tuning_parameters}
\end{figure}

{\it Going beyond the Hubbard limit.}
So far, we have shown results for the attractive Hubbard model, which has been extensively studied and is amenable to QMC methods due to the absence of a sign problem~\cite{annurev:/content/journals/10.1146/annurev-conmatphys-033117-054307}. We now go beyond the Hubbard limit by introducing nearest-neighbor magnetic exchange ($S^z$-$S^z$) interactions with strength $J$, in addition to the Hubbard interactions in Model I [defined in Eq.~\eqref{UJ model}], which we denote as Model I$'$. Importantly, there is no obvious way to avoid the fermion sign problem for $J<0$ (ferromagnetic) interactions. However, the FF property of the model persists as long as $2|J|\leq |U|$, ensuring that our bootstrap algorithm remains applicable.

In Fig.~\ref{fig:tuning_parameters}, we show the superfluid stiffness $D_\text{s}$ of Model I$'$. We tune $J$ while fixing $|U|=1$ for two different fillings: $\nu=0.2$ (red) and $\nu=0.4$ (blue). We find that the calculated lower bound $D_\text{s}/[\nu(1-\nu)]$ again agrees with $(m_\text{pair})^{-1}$ and thus saturates the upper bound, corroborating the exactness of Eq.~\eqref{lower_bound_formula} (note that here $(m_\text{pair})^{-1} \neq 2g|U|/N_\text{band}$, unlike in the Hubbard case).
Moreover, the computed $D_\text{s}$ depends negatively on $J$, suggesting that additional ferromagnetic coupling enhances singlet SC order in topological FBs. In comparison, we show in the inset that the single-particle excitation gap (exactly calculable in QGN models~\cite{PhysRevX.14.041004}) exhibits the opposite trend but remains larger than the stiffness. This suggests a strong-coupling SC regime in which $T_c$ is controlled by the phase stiffness in the studied parameter range. These observations reveal an intricate interplay between interactions and FB quantum geometry in FB SC.

\textit{Discussions.--}
We note a caveat that, strictly speaking, the quantity we lower bound in our finite-size numerical simulations is the Drude weight, which contains contributions from both super- and normal-current responses. However, due to the presence of a charge (pairing) gap in the QGN models studied here, there is no normal current response at zero temperature. As a result, the Drude weight coincides with the superfluid stiffness~\cite{note_on_Drude_weight}.

We stress that our method is generally applicable to FF models beyond the QGN family. For example, in Ref.~\cite{han2025exactmodelschiralflatband} we use the same method to lower bound the stiffness in another FF model for FB SC within a single flavorless band. In that case, the simple relation between two-particle mass and stiffness in Eq.~\eqref{lower_bound_formula} no longer holds, and the lower bound does not saturate the upper bound.

Throughout this work, we implemented the DQG and T1/T2 constraints, but found that only the T2 constraint yields a non-trivial lower bound on the stiffness, which turns out to be exact. Since the RDM bootstrap bounds physical quantities using correlation functions, this suggests that the T2 constraint captures the essential correlations of the superconducting state in the presence of a phase twist.

Finally, we note that the present framework is easily generalized to lower bound susceptibilities in FF models. To do so, one can introduce a set of probing fields into any FF Hamiltonian,
\begin{align}
    \hat{H} = \hat{H}_\text{FF} + \sum_{i} \lambda_i \hat{h}_i ,
\end{align}
which perturbatively break the symmetries. One can then use the strategy introduced here to bound
\begin{equation}
\begin{split}
    &E_{\rm boot}|_{\boldsymbol\lambda=0}=  E_{g}|_{\boldsymbol\lambda=0}
    =\left.\frac{\partial E_{\rm boot}}{\partial \lambda_\mu}\right|_{\boldsymbol\lambda=0}
    = \left.\frac{\partial E_{g}}{\partial \lambda_\mu}\right|_{\boldsymbol\lambda=0} =0,\\
    &\Rightarrow \left[\chi_{\boldsymbol{\lambda}}\right]_{\mu\nu}\equiv
    \left.\frac{\partial^2 E_{\rm boot}(\boldsymbol\lambda)}{\partial \lambda_\mu\partial \lambda_\nu}\right|_{\boldsymbol\lambda=0}
    \preceq
    \left.\frac{\partial^2 E_{g}(\boldsymbol\lambda)}{\partial \lambda_\mu\partial \lambda_\nu}\right|_{\boldsymbol\lambda=0}.
\end{split}
\end{equation}
Here the first derivative vanishes due to symmetry, and $\chi_{\boldsymbol{\lambda}}$ is the negative of the physical susceptibility. This bound is complementary to variational bounds and may provide useful information on the location of symmetry-breaking phase transitions.

\begin{acknowledgments}
\textit{Acknowledgments.---} We thank Jonah Herzog-Arbeitman for very helpful discussions.
Z.~H. is supported by a Simons Investigator award, the
Simons Collaboration on Ultra-Quantum Matter, which is a grant from the Simons Foundation (Ashvin Vishwanath, 651440). E.~K. is supported by NSF MRSEC DMR-2308817 through the Center for Dynamics and Control of Materials. The authors thank the Harvard FAS Reaserch Computing (FASRC) for computational support. 

\textit{Data and code availability.---} Codes to reproduce the essential results are available at~\cite{Github_repo}.
\end{acknowledgments}

\bibliographystyle{plain}
\bibliographystyle{unsrt}

\clearpage
\appendix
\onecolumngrid 
\vspace{4ex} 
{\centering\bfseries\large End Matter\par}
\vspace{4ex} 
\twocolumngrid 
\setcounter{secnumdepth}{1}

\section{A Theorem on the collapsing point of bootstrap feasible boundaries}\label{Theorem_bootstrap}
Let us denote any $N$-particle density matrix by $\hat \rho^N$. We denote the expectation value of any operator by
\begin{equation}
    \langle \hat O \rangle = { \Tr \hat \rho^N \hat O}
\end{equation}
For a $p$-body operator, the expectation value is obtained from $\langle \hat O \rangle = {\Tr \hat \rho^p_N \hat O}$ where $\hat\rho^p_N =\binom{N}{p}\Tr_{p+1\cdots N}[\hat\rho^N]$ is the $p$-reduced density operator ($p$-RDO). In particular, we will be interested in the 2-RDO defined as
\begin{equation}
    \hat\rho^2_N=\binom{N}{2}\Tr_{3\cdots N}[\hat\rho^N]\equiv \sum_{\alpha\beta\gamma\delta}\leftindex^2D^{\alpha\beta}_{\gamma\delta} |\alpha\beta\rangle\langle \gamma\delta|
\end{equation}
where $ |\alpha\beta\rangle$ is the 2-particle basis and $ \leftindex^2D$ is the 2RDM.
\begin{definition}[$(2,p)$ constraints]\label{2p_constraint}
    A constraint is called a $(2,p)$ constraint if:
    \begin{enumerate}
       \item it can be written as $\langle \sum_i\hat{O}_i\hat{O}^\dagger_i \rangle \geq 0$
        , where the maximal degree of $\hat{O}_i$'s is $p$, i.e., containing $p$ creation and/or annihilation operators.
        \item it is fully expressible in 2RDMs: $\langle \sum_i\hat{O}_i\hat{O}^\dagger_i \rangle := \Tr[\hat\rho^N\sum_i\hat{O}_i\hat{O}^\dagger_i] = \Tr[\hat\rho^2_N\sum_i\hat{O}_i\hat{O}^\dagger_i]$.
    \end{enumerate}
\end{definition}
\begin{remark}
    Condition 2 requires that the summation $\sum_i \hat{O}_i\hat{O}^\dagger_i$ cancels all terms beyond 2-body.
    If a $(2,p)$ constraint is saturated, then the corresponding 2RDM must lie on the boundary of $\leftindex^2_N{\mathbb{D}}^{(2,p)}$.
\end{remark}

\begin{definition}[$p$-body exactness/frustration-freeness]\label{p_body_exact}
A Hamiltonian is called $p_\mathbb{S}$-body exact Hamiltonian if up to a trivial constant shift:
\begin{enumerate}
  \item its ground state energy in $N$-particle sector is zero $\forall N\in\mathbb{S}$, where $ \mathbb{S}$ is a set of positive integers.
  \item it can be decomposed into sum of squares: $\hat H=\sum_i \hat{h}_i\hat{h}^\dagger_i$ where the maximal degree of $\hat{h}_i$'s is $p$.
\end{enumerate}
\end{definition}

\begin{remark}
  It is $p_N$-body exact if $\mathbb{S}=\{N\}$.
  A $p_{\mathbb{S}}$-body exact Hamiltonian can contain only $q$-body terms with $q\leq p$. Any Hamiltonian after a shift is trivially $N_N$-body exact: $\hat H=\sum_n(\varepsilon_n-\varepsilon_g) |\Psi^N_n\rangle\langle \Psi^N_n|$ with $|\Psi^N_n\rangle$ the $N$-particle eigenstates of energy $\varepsilon_n$. Thus, if a $p_N$-body exact Hamiltonian has $p<N$, it is non-trivial. 
\end{remark}

\begin{theorem}\label{boundary_collapse}
    An $N$-particle ground state 2RDM of a 2-body Hamiltonian with $p_N$-body exactness is a common boundary point of feasible regions $^2_N\mathbb{D}^{(2,q)}$ $\forall q: p\leq q\leq N$. 
\end{theorem}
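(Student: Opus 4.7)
The plan is to use the sum-of-squares decomposition guaranteed by $p_N$-body exactness to exhibit, for each $q$ in the stated range, a saturated $(2,q)$ constraint at the ground-state 2RDM. First, let $|\Psi\rangle$ be any $N$-particle ground state of $\hat H$ and let $\leftindex^2D$ be its 2RDM. Writing $\hat H = \sum_i \hat h_i \hat h_i^\dagger$ as in Definition~\ref{p_body_exact}, each $\hat h_i \hat h_i^\dagger$ is positive semidefinite, while the total expectation $\langle\Psi|\hat H|\Psi\rangle$ vanishes by the zero-energy assumption. This forces $\langle\Psi|\hat h_i\hat h_i^\dagger|\Psi\rangle = 0$ for every $i$ individually, so the aggregate $\langle \sum_i \hat h_i\hat h_i^\dagger \rangle$ evaluated on $\leftindex^2D$ equals zero.

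Next, I would show that $\sum_i \hat h_i\hat h_i^\dagger$ satisfies both clauses of Definition~\ref{2p_constraint} at level $p$. Clause~1 is immediate: the $p_N$-body exactness bounds the degree of each $\hat h_i$ by $p$. Clause~2 requires the combination to be fully two-body, and this is enforced not term by term but by the hypothesis that $\hat H$ itself is a two-body operator: individual $\hat h_i\hat h_i^\dagger$ can in principle generate strings up to degree $2p$, yet their sum reduces, by hypothesis, to exactly $\hat H$, so all higher-body fermionic strings cancel identically and $\Tr[\hat\rho^N \sum_i\hat h_i\hat h_i^\dagger] = \Tr[\hat\rho^2_N \hat H]$ is unambiguously a functional of the 2RDM. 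The saturation identity from the first paragraph, together with the remark following Definition~\ref{2p_constraint}, then places $\leftindex^2D$ on the boundary of $\leftindex^2_N\mathbb{D}^{(2,p)}$.

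To extend to any $q$ with $p\leq q\leq N$, I would observe that a constraint whose operators have maximal degree $p$ is automatically a constraint of maximal degree $q$ whenever $q\geq p$, so the very same identity $\sum_i \hat h_i\hat h_i^\dagger \geq 0$ qualifies as a $(2,q)$ constraint. Meanwhile $\leftindex^2D \in \leftindex^2_N\mathbb{D} \subseteq \leftindex^2_N\mathbb{D}^{(2,q)}$ by the inclusion hierarchy in Eq.~\eqref{2p_hierarchy}, and the constraint is still saturated on $\leftindex^2D$. Therefore $\leftindex^2D$ lies on the boundary of $\leftindex^2_N\mathbb{D}^{(2,q)}$; letting $q$ range over $\{p,\dots,N\}$ yields the common boundary claim.

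The only delicate point, and hence the main conceptual obstacle, is the verification of Clause~2 of the $(2,p)$-constraint definition: one must not try to check two-body-ness of each square $\hat h_i\hat h_i^\dagger$ in isolation, but must instead invoke the collapse of the sum to the two-body $\hat H$, which is what makes the higher-body operator strings cancel identically. Once this cancellation is made explicit, the remainder of the proof is a direct application of sum-of-squares positivity together with the inclusion hierarchy~\eqref{2p_hierarchy}.
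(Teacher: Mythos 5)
Your proof is correct in substance, and its first half (sum-of-squares positivity forcing $\langle\Psi^N_g|\hat h_i\hat h_i^\dagger|\Psi^N_g\rangle=0$ for every $i$, plus the observation that two-body-ness of $\sum_i\hat h_i\hat h_i^\dagger$ holds because the \emph{sum} collapses to the two-body $\hat H$ rather than term by term) is essentially the paper's argument for placing $\leftindex^2D_g$ on the $(2,p)$ boundary. Where you genuinely diverge is the extension to $q>p$: the paper constructs new saturated constraints at each level by lifting, i.e.\ replacing $\hat h_i$ with $\hat h_i c^\dagger_{j_1}\cdots c^\dagger_{j_{q-p}}$ and summing over the appended indices, so that the operators have degree exactly $q$, the combination remains 2RDM-expressible, and saturation follows from the stronger statement $\hat h_i^\dagger|\Psi^N_g\rangle=0$ (which you only use in expectation-value form). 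You instead reuse the original degree-$p$ constraint, declaring it a $(2,q)$ constraint for all $q\geq p$. Under a strict reading of Definition~\ref{2p_constraint} (``the maximal degree of $\hat O_i$'s \emph{is} $p$'') that reclassification is a definitional stretch; the airtight version of your shortcut is purely geometric: by the nesting in Eq.~\eqref{2p_hierarchy}, $\leftindex^2_N{\mathbb{D}}^{(2,q)}\subseteq\leftindex^2_N{\mathbb{D}}^{(2,p)}$, and a point that belongs to the smaller set (here $\leftindex^2D_g\in\leftindex^2_N{\mathbb{D}}$) while lying on the boundary of the larger set must lie on the boundary of the smaller set, since every neighborhood already contains points outside $\leftindex^2_N{\mathbb{D}}^{(2,p)}$ and hence outside $\leftindex^2_N{\mathbb{D}}^{(2,q)}$. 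With that one sentence added, your argument is complete and more economical than the paper's; what the paper's lifting buys in exchange is an \emph{explicit} saturated constraint at every level $(2,q)$, which makes the collapse of all boundaries at the FF point constructive and connects directly to the numerical observation that specific higher-level conditions (e.g.\ T2) are the ones doing the work.
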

\begin{proof}
    By definition~\ref{p_body_exact}, any $N$-particle ground state, $|\Psi^N_g\rangle$ (with $\hat\rho^N_g=|\Psi^N_g\rangle\langle\Psi^N_g|$) of any two-body Hamiltonian $\hat H$ with $p_N$-body exactness satisfies the null condition:
    \begin{equation}
       \Tr[\hat\rho^N_g\hat H] =\langle \Psi^N_g |H|\Psi^N_g\rangle =  \langle \Psi^N_g |\sum_i \hat{h}_i\hat{h}^\dagger_i|\Psi^N_g\rangle = 0,
    \end{equation}
    where $\sum_i \hat{h}_i\hat{h}^\dagger_i$ contains only two-particle terms with degree $p$ operators $\hat{h}_i$. This implies $\hat h^\dagger_i|\Psi^N_g\rangle=0$ $\forall i$.
    Thus, if we consider the following constraint: $\Tr[\hat\rho^2_N\sum_i\hat{h}_i\hat{h}^\dagger_i]\geq 0$,
    which by definition~\ref{2p_constraint} is a $(2,p)$ constraint, it is saturated by the ground state 2-particle reduced density operator: $\hat\rho^2_g=\binom{N}{2}\Tr_{3\cdots N}[\hat\rho^N_g]$ meaning that its corresponding 2RDM $\leftindex^2D_g$ is on the ($2,p$) boundary.
    Now, we lift the constraint by inserting the number operator $q-p$ times~\cite{PhysRevLett.108.263002}
    \begin{equation}
        \langle \Psi^N |\sum_{ij_1,\cdots,j_{q-p}} \left[\hat{h}_ic^\dagger_{j1}\cdots c^\dagger_{j_{q-p}}\right]\left[\hat{h}_ic^\dagger_{j1}\cdots c^\dagger_{j_{q-p}}\right]^\dagger|\Psi^N\rangle\geq 0,
    \end{equation}
    which is a $(2,q)$ constraint since it involves operators of degree $q$ and still expressible in 2RDMs. It is again saturated by the ground state 2RDM $\leftindex^2D_g$ since $\hat h^\dagger_i|\Psi^N_g\rangle=0$. Thus, $\leftindex^2D_g$ must be the common boundary point of all feasible regions at levels $(2,q)$ $\forall q: p\leq q\leq N$.
\end{proof}

\begin{remark}
    The FF models discussed in this work is $2_{2\mathbb{Z}^+}$-body exact. Thus, in any fixed even particle number sector (even $N$), we have that all the $(2,p)$ boundaries for $p=2,\cdots,N$ collaps to a single point, the 2RDM of the ground state which we refer as the FF point (see the red dot in Fig.~\ref{fig:Bootstrap_Superconductor_schematic}(b)). This also means that the lower bound on the ground state energy for those FF models is exact when imposing the $(2,p)$ constaints with $p\geq2$.
\end{remark}

\begin{figure}
    \centering
    \includegraphics[width=0.9\linewidth]{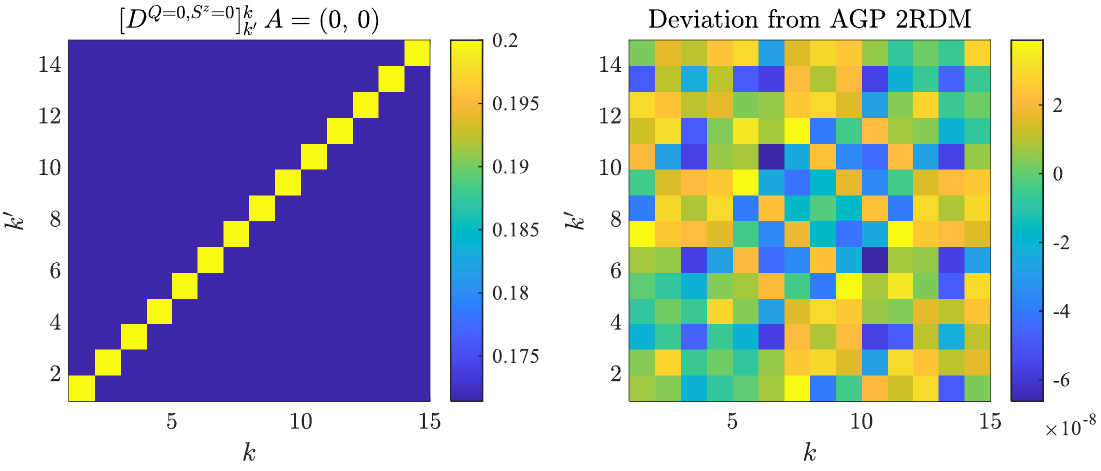}
\caption{The 2RDM in $\bQ=0,S^z=0$ sector at the FF point obtained from bootstrap with T2 (left panel) and its deviation from the exact AGP 2RDM~\eqref{AGP_2RDM} (right panel), indicating a perfect agreement. }
    \label{fig:2RDM}
\end{figure}

\section{Details about the model and the ground state}\label{FF_SC}

In this appendix, we present the details about the models used in the main text. There are two band structures used in this work. The first (referred to as Model I  in the main text when studied with Hubbard interaction) is a topological FB given by the lower band of the hopping matrix~\cite{PhysRevB.102.201112}:
\begin{equation}\label{Model_I}
    H_{\rm I}(\bk) = B_\bk^0\sigma^0 +\bB_\bk\cdot\bsigma
\end{equation}
with
\begin{equation}
\begin{split}
B_{\bk}^{x} + i\,B_{\bk}^{y} 
  =& -t_{1}\left[\,e^{i\pi/4} + \,e^{\mathrm{i}(k_x+k_y) +i\pi/4}\right.\nonumber\\
  & \ \ \ \ \ \ \ \ \ \ \ \left. + \,e^{\mathrm{i}k_x -i\pi/4}\,+\,e^{\mathrm{i}k_y -i\pi/4} \right],\\
B_{\bk}^{z} 
  =& -2t_{2}\,\bigl[\cos(k_{y})-\cos(k_{x})\bigr],\\
B_{\bk}^{0} 
  =& -2t_{5}\,\bigl[\cos\!\bigl(2(k_{x})\bigr)
      +\cos\!\bigl(2(k_{y})\bigr)\bigr],
\end{split}
\end{equation}
where we have chosen the periodic embedding, and we approach the flat-band limit by taking: $t_1 = 1$, $t_2 = 1/\sqrt{2}$, and $t_5 = (1-\sqrt{2})/4$. The other FB (Model II) is topologically trivial but has a tunable quantum-geometry~\cite{PhysRevLett.130.226001}:
\begin{equation}
\begin{split}\label{Model_II}
H_{\rm II}(\bk)&= 
  -t 
  \bigl(\,
      \sigma^{x}\,\sin\alpha_{\bk}+
      \sigma^{y}\,\cos\alpha_{\bk}+
      \mu\sigma^{0}
  \bigr)\\
\alpha_{\bk} &= \xi\bigl(\cos k_{x} + \cos k_{y}\bigr),
\end{split}
\end{equation}
where we take $t=1$ and $\mu=0$. 

For both cases, the lower band is a spinful, time-reversl symmetric, $S^z$ conserving flat-band, and thus satisfy quantum geometric nesting (QGN)~\cite{PhysRevX.14.041004} in the particle-particle channel at zero momentum. This nesting structure allows for an infinite set of solvable model construction. The simplest one is the attractive Hubbard model:
\begin{equation}\label{FF_SC_Hamiltonian}
    \begin{split}
    \hat{H} &= \frac{|U|}{2}  \sum_{\vec{{R}} \alpha }\left({\bar{n}}_{ \vec{R} \alpha \uparrow}  - {\bar{n}}_{ \vec{R} \alpha \downarrow}\right)^2,
    \end{split}
\end{equation}
where ${\bar{n}}_{ \vec{R} \alpha \sigma}$ is the number operator of orbital $\alpha$ at unit cell at $\vec{R}$ with spin $\sigma$, projected onto the FB subspace. 

The second QGN interacting model we study for the band structure in Model I (referred to as Model I'), is
\begin{align}\label{UJ model}
    \hat{H} &= \frac{|U|}{2}  \sum_{\vec{{R}} \alpha }\left({\bar{n}}_{ \vec{R} \alpha \uparrow}  - {\bar{n}}_{ \vec{R} \alpha \downarrow}\right)^2+  J\sum_{\langle ij\rangle} \hat{S}^z_{i} \hat{S}^z_j ,
\end{align}
which includes an additional magnetic coupling $J$ that is among nearest neighbor orbitals. To ensure FF-ness, we demand $|J| \leq \frac{|U|}{2}$.

In these simple cases, the projection onto the FB subspace can be implemented by keeping terms that only involve electron operators within the FBs (note that this is generically not true, see Ref.~\cite{PhysRevX.14.041004} for detailed discussions):
\begin{align}
    \hat{\gamma}_{\vec{k}n\in \mathcal{F} \sigma} &= U_{n\alpha}^{\sigma,\dagger}(\vec{k}) \hat{c}_{\vec{k}\alpha\sigma},
\end{align}
where $\mathcal{F}$ is the set of FB indices, and $U_{\alpha n}^\sigma(\vec{k})$ is the $n$-th band wavefunction. The coupling to flat gauge connection is achieved by replacing $\vec{k} \rightarrow \vec{k}+\vec{A}$ in the above $U$ argument. For completeness, we include the expression for the used projected Hamiltonian in momentum space:
\begin{equation}
    \begin{split}
    \hat{H}=&
 \frac{1}{8\mathsf{V}}\sum_{\substack{\vec{k}_1+\vec{k}_3 = \vec{k}_2 + \vec{k}_4, \pm, \alpha\beta \\ nmkl \in \mathcal{F}, \sigma \sigma',I =1\dots 4}}  \left(\frac{|U|}{2} \pm J\right)  (-1)^{\sigma +\sigma'} \\
 & \times D^{(I,\pm)}_{\alpha\alpha} (\vec{k}_1-\vec{k}_2)D^{(I,\pm)}_{\beta\beta}(\vec{k}_3-\vec{k}_4) \\
&\times\hat{\gamma}^\dagger_{\vec{k}_1 n \sigma }U_{n \alpha}^{\sigma,\dagger}(\vec{k}_1+\vec{A}) U_{\alpha m}^{\sigma}(\vec{k}_2+\vec{A}) \hat{\gamma}_{\vec{k}_2 m\sigma }\\
&\times\hat{\gamma}^\dagger_{\vec{k}_3 k\sigma' } U_{k\beta}^{\sigma' ,\dagger}(\vec{k}_3+\vec{A}) U_{\beta l}^{\sigma' }(\vec{k}_4+\vec{A})\hat{\gamma}_{\vec{k}_4 l\sigma'}
\end{split}
\end{equation}

where
\begin{align}
    D^{(I,\pm)}_{AA} (\vec{q}) =& 1 \\
    D^{(I,\pm)}_{BB} (\vec{q}) =&  \pm \begin{cases}
    1 & I=1\\
         e^{- \mathrm{i}\vec{q}_x} & I=2 \\
         e^{- \mathrm{i}\vec{q}_y}  & I=3 \\
         e^{- \mathrm{i}(\vec{q}_x+\vec{q}_y)}  & I=4
    \end{cases}
\end{align}

It can be shown that the Hamiltonian commutes with a uniform $s$-wave pair creation operator~\cite{PhysRevX.14.041004}:
\begin{align}
    \eta \equiv \sum_{\vec{k}} c_{\bk\uparrow}c_{-\bk\downarrow}
\end{align}
when there is no external gauge field $\bA= 0$. Therefore, this is a FF model whose ground state in any even particle number $N\in 2\mathbb{Z}$ sector has zero energy $E=0$ and takes a special form called anti-symmetrized geminal power (AGP)~\cite{coleman1965structure}:
\begin{equation}
    |\text{AGP}\rangle = \frac{1}{(N/2)!}\left( \eta^\dagger \right)^{N/2}|\text{Vac}\rangle.
\end{equation}
This then allows us to analytically write down the exact 2RDM of such ground state~\cite{khamoshi2019efficient}, specifically, in the $\bQ$ and total $S^z=0$-sector:
\begin{equation}\label{AGP_2RDM}
    \begin{split}
    &\left[\leftindex^2D_{\text{AGP}}^{\bQ,S^z=0}\right]^\bk_{\bk'} = \langle\text{AGP}|c^\dagger_{\bk,\uparrow} c^\dagger_{\bQ-\bk,\downarrow}c_{\bQ-\bk',\downarrow}c_{\bk',\uparrow}|\text{AGP}\rangle \\
    & = \delta_{\bQ,0}[C_1\delta_{\bk,\bk'}+C_3(1-\delta_{\bk,\bk'})]+(1-\delta_{\bQ,0})C_2\delta_{\bk,\bk'}
    \end{split}
\end{equation}
with $C_1=N_{pair}/N_o,C_2 =N_{pair}(N_{pair}-1)/N_o(N_o-1)$, and $C_3 =N_{pair}(N_o-N_{pair})/N_o(N_o-1)$, which can be used as a check for the bootstrap at $\bA =0$ (see Fig.~\ref{fig:2RDM}).

\clearpage          
\onecolumngrid      

\foreach \p in {1,...,17}{
  \includepdf[pages={\p},pagecommand={},fitpaper]{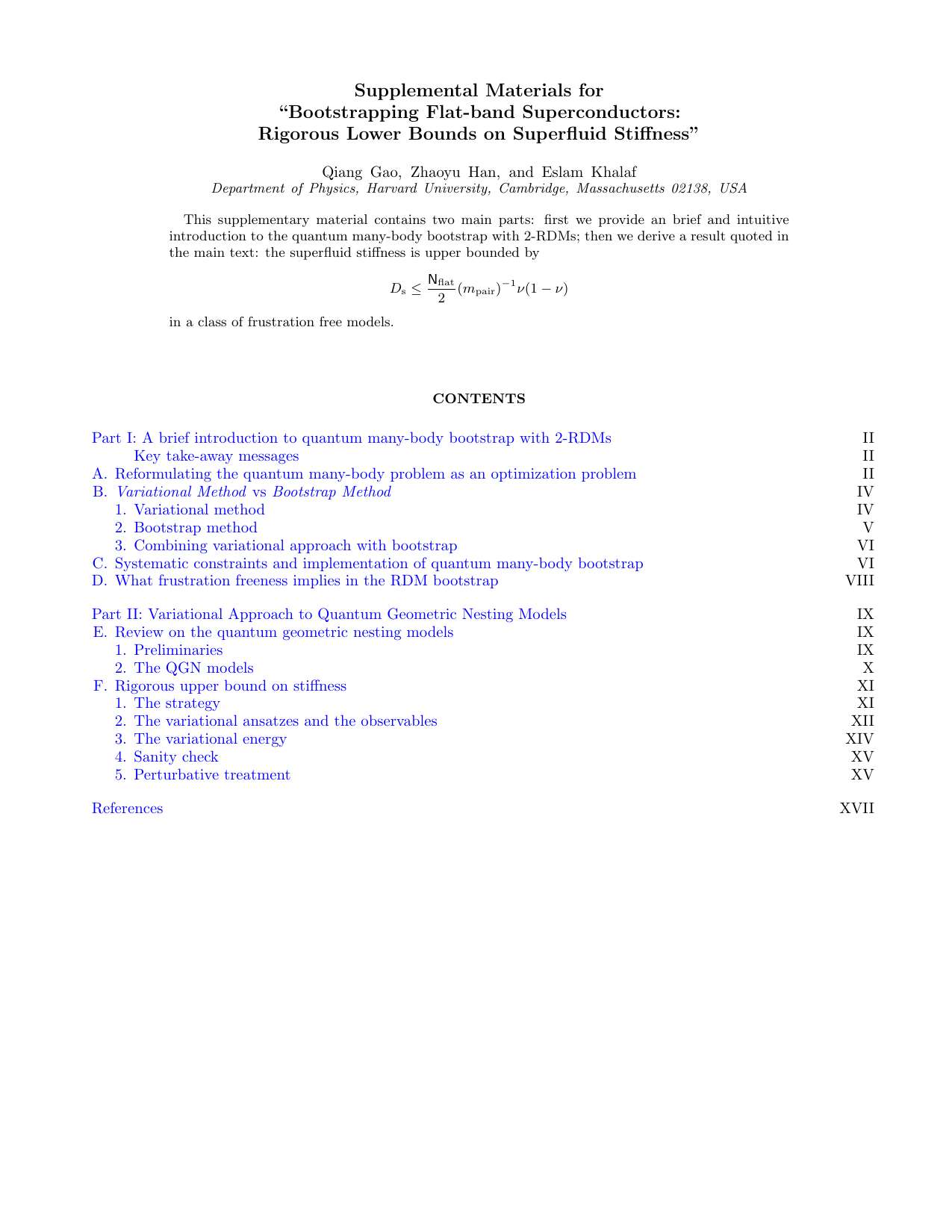}%
}
\end{document}